\newtheorem{theorem}{Theorem}[section]
\newtheorem{proposition}[theorem]{Proposition}
\theoremstyle{definition}
\newtheorem{definitions}[theorem]{Definitions}
\newtheorem{example}[theorem]{Example}
\newcommand{\ignore}[1]{}
\begin{document}

\title{False name manipulations in weighted voting games: splitting, merging and annexation}

\author{
\em Haris Aziz
\and \em Mike Paterson
}

\date{} 

\maketitle

\begin{abstract}
An important aspect of mechanism design in social choice protocols and multiagent systems is to discourage insincere and manipulative behaviour. We examine the computational complexity of false-name manipulation in weighted voting games which are an important class of coalitional voting games. Weighted voting games have received increased interest in the multiagent community due to their compact representation and ability to model coalitional formation scenarios. Bachrach and Elkind in their AAMAS 2008 paper examined divide and conquer false-name manipulation in weighted voting games from the point of view of Shapley-Shubik index. We analyse the corresponding case of the Banzhaf index and check how much the Banzhaf index of a player increases or decreases if it splits up into sub-players. A pseudo-polynomial algorithm to find the optimal split is also provided. Bachrach and Elkind also mentioned manipulation via merging as an open problem. In the paper, we examine the cases where a player annexes other players or merges with them to increase their Banzhaf index or Shapley-Shubik index payoff. We characterize the computational complexity of such manipulations and provide limits to the manipulation. The \emph{annexation non-monotonicity paradox} is also discovered in the case of the Banzhaf index. The results give insight into coalition formation and manipulation.
\end{abstract}

\section{Introduction}
 
\subsection{Motivation}

Weighted voting games (WVGs) are mathematical models which are used to analyze voting bodies in which the voters have different numbers of votes.  In WVGs, each voter is assigned a non-negative weight and makes a vote in favour of or against a bill. The bill is passed if and only if the total weight of those voting in favour of the bill is greater than or equal to some fixed quota. Power indices such as the Banzhaf index measure the ability of a player in a WVG to determine the outcome of the vote. 

WVGs have received increased interest in the artificial intelligence and agents community due to their ability to model various coalition formation  scenarios \cite{Dimensionality-of-Voting-Games,DBLP:conf/aaai/ElkindGGW07}. Such games have also been examined from the point of view of susceptibility to manipulations \cite{edithideas,Zuckerman:2008:AAAI}. WVGs and coalitional voting games are also encountered in threshold logic, reliability theory, neuroscience and logical computing devices \cite{simplegames, 45427, coherentandsimplegames}. WVGs have been applied in various political and economic organizations~\cite{leech_imf, Laurelle, bilbao_2007}. Voting power is used in joint stock companies where each shareholder gets votes in proportion to the ownership of a stock \cite{gambarelli2}. 

Elkind et al. \cite{Dimensionality-of-Voting-Games} note that since WVGs have only two possible outcomes, they do not fall prey to manipulation of the type characterized by Gibbard-Satterthwaite~\cite{RePEc:ecm:emetrp:v:41:y:1973:i:4:p:587-601}. However, there are various ways WVGs can be manipulated and controlled. Splitting of a player into sub-players can be seen as a false-name manipulation by an agent where it splits itself into multiple agents so that the sum of the utilities of the split-up players is more than the utility of the original player~\cite{edithideas}. We examine situations when a player splitting up into smaller players may be advantageous or disadvantageous in the context of WVGs and Banzhaf indices. This gives a better idea of how to devise WVGs in which manipulation can be deterred. This may be done by keeping larger or non-integer weights. Moreover, we also examine the case of players merging to maximize their payoff in a WVG. This was mentioned as an unexplored question in \cite{edithideas}.

\subsection{Outline}

In Section~\ref{Preliminaries-LAMSADE}, some basic definitions concerning simple games, weighted voting games and power indices are provided. Section \ref{AAMAS2009-relatedwork} provides a brief literature survey. In Section~\ref{Splitting-LAMSADE}, the case of players splitting up into sub-players in a WVG to increase their Banzhaf index is analysed. We examine the extent to which the Banzhaf index of a player can increase or decrease if it splits up into sub-players. From a computational perspective, it is \#P-hard~\cite{91928} to compute the payoff in the WVG. A prospective manipulator could still be interested in enabling a beneficial split even if he cannot compute the actual payoff. Moreover, this model is reasonable because the centre is assumed to have much more computational resources than the players. In Section~\ref{AAMAS-COMPLEXITY-SECTION}, we prove that it is NP-hard even to decide whether a split is beneficial or not. In the end a pseudo-polynomial algorithm is proposed which returns `no' if no beneficial split is available and returns the optimal split otherwise. Section \ref{MERGING-SECTION-AAMAS} is about the case of players annexing others or voluntarily merging into blocs to maximize their payoffs. It is shown that it is NP-hard to decide a beneficial merge for both the Banzhaf index and the Shapley-Shubik index and to decide a beneficial annexation for the case of the Banzhaf index.  Limits to manipulation are also provided. The final section presents conclusions and ideas for future work.

\section{Preliminaries}\label{Preliminaries-LAMSADE}

In this section we give definitions and notations of key terms. The set of voters is $N = \{1,...,n\}$. 

\begin{definitions}
A \emph{simple voting game} is a pair $(N,v)$ where the valuation function $v:2^N \rightarrow \{0,1\}$ has the properties that $v(\emptyset)=0$, $v(N)=1$ and $v(S)\leq v(T)$ whenever $S \subseteq T$. A \emph{coalition} $S \subseteq N$ is \emph{winning} if $v(S)=1$ and \emph{losing} if $v(S)=0$. A simple voting game can alternatively be defined as $(N,W)$ where $W$ is the set of winning coalitions.
\end{definitions}

\begin{definitions}
The simple voting game $(N,W)$ where \\
$W=\{X \subseteq N, \sum_{x \in X}{w_x} \geq q \}$ is called a \emph{weighted voting game} (WVG). A WVG is denoted by $[q;w_1,w_2,...,w_n]$ where $w_i \geq 0$ is the voting weight of player $i$. By convention, we take $w_i \geq w_j$ if $i<j$.
\end{definitions}

Usually, $q>\frac{1}{2}{\sum_{1 \leq i \leq n}w_i}$ so that there are no two mutually exclusive winning coalitions at the same time. WVGs with this property are termed \emph{proper}. Proper WVGs are desirable because more than a majority is necessary to force the decision.


\begin{definitions}
A player $i$ is \emph{critical in} a winning coalition $S$ when $S \in W$ and $S \setminus \{i\} \notin W$. For each $i \in N$, we denote the number of coalitions in which $i$ is critical in game $v$ by ${{\eta}_{i}}(v)$. The \emph{Banzhaf index} of player $i$ in WVG $v$ is $\beta_i = \frac{{{\eta}_{i}}(v)}{{\sum}_{i \in N}{{\eta}_{i}}(v)}$. The \emph{probabilistic Banzhaf index}, $\beta_i^{'}$ of player $i$ in game $v$ is equal to  ${{{\eta}_{i}}(v)}/2^{n-1}$.
\end{definitions}

\begin{definitions}
The \emph{Shapley-Shubik value} is the function $\kappa$ that assigns to any simple game $(N,v)$ and any voter $i$ a value $\kappa_{i}
(v)$ where $\kappa_{i} = \sum_{X \subseteq N} (|X|-1)!(n-|X|)!(v(X)- v(X-\{i\}))$. The \emph{Shapley-Shubik index} of $i$ is the function $\phi$ defined by $\phi_i=\frac{{\kappa}_{i}}{n!}$
\end{definitions}






\section{Related work}\label{AAMAS2009-relatedwork}

Weighted voting games date back at least to John von Neumann and Oskar Morgenstern who developed their theory in their monumental book \emph{Theory of Games and Economic Behavior}~\cite{1944book}. WVGs and voting power indices have been analyzed extensively in the game theory literature for instance in \cite{Dubey1979,Roth1988}. They have been applied to various economic and political bodies such as the EU Council of Ministers and the IMF \cite{leech_imf}. Power indices such as the Banzhaf index and the Shapley-Shubik index originated in such a setting in order to gauge the decision making ability of players. These indices have now been utilized in different domains such as networks \cite{DBLP:conf/atal/BachrachRP08}. Simple games and weighted voting games are known by different names in other literatures and communities. There is considerable work on similar models in threshold logic \cite{murogabookthresholdlogic}.

As useful and succinct models for coalitional voting games, WVGs have been utilized in multiagent systems. Voting power indices in WVGs have received increased interest in multiagent systems \cite{DBLP:conf/atal/FatimaWJ08,DBLP:conf/atal/BachrachMPRS08, Aziz2007a, Aziz-comsoc2008}. The dimension of a multiple weighted voting game is the minimum number of weighted voting games required to represent it. The dimension of multiple weighted voting games has been examined in \cite{Dimensionality-of-Voting-Games} and \cite{DBLP:journals/eor/DeinekoW06}. Moreover, the complexity of questions related to important cooperative game solutions in WVGs such as the core and nucleolus are considered in \cite{DBLP:conf/aaai/ElkindGGW07}. WVGs have also been examined from the point of view of control and manipulation. Zuckerman et al. \cite{Zuckerman:2008:AAAI} analyse how the centre might control WVGs by changing the quota even if the weights are fixed. The most relevant work is by Elkind et al. \cite{edithideas} where they examine false-name manipulation in WVGs from the point of view of the Shapley-Shubik index. In fact, our paper answers problems posed by Elkind et al. Players forming blocs have been considered by political scientists and economists previously \cite{mainbook}. However, in this paper, a complexity theoretic analysis of bloc forming manipulation has also been undertaken for WVGs. False name manipulations in open anonymous environments have been examined in different domains such as coalitional games~\cite{DBLP:conf/aaai/YokooCSOI05, DBLP:conf/aaai/OhtaIYMCS06, DBLP:conf/atal/OhtaCSIY08} and auctions~\cite{1345049,DBLP:conf/wine/IwasakiKSSY07}. The characteristic function by itself does not give enough
information to analyze false-name manipulations especially if a player splits into sub-players. Therefore Yokoo et al.~\cite{DBLP:conf/aaai/YokooCSOI05} introduced the model where each player has a subset of skills and the characteristic function assigns values to the subset of skills. We notice that false-name manipulations in WVGs can still be analyzed directly without considering more fine-grained representations.

\section{Splitting}\label{Splitting-LAMSADE}

\subsection{Background}

In the real world, WVGs may be dynamic. Players might have an incentive to split up into smaller players or merge into voting blocks. 
Payoffs of players in a coalitional games setting can be based on fairness, i.e., power indices, or they can be based on the notion of stability, which includes many cooperative game theoretic concepts such as core, nucleolus etc. We examine the situation when the Banzhaf indices of agents can be used as payoffs in a cooperative game theoretic situation. Falsenthal and Machover~\cite{RePEc:spr:sochwe:v:19:y:2002:i:2:p:295-312} refer to this notion of voting power as P-power since the motivation of agents is prize-seeking as opposed to influence-seeking. However Banzhaf indices have been considered as possible payments in cooperative settings~\cite{van:August2002:0040-5833:61, edithideas} and they satisfy desirable axioms~\cite{Dubey1979}. Splitting of a player can be seen as a false-name manipulation by an agent, in which it splits itself into multiple agents so that the sum of the utilities of the split-up players is more than the utility of the original player~\cite{edithideas}. 

Splitting is not always beneficial. We give examples where, if we use Banzhaf indices as payoffs of players in a WVG, splitting can be advantageous, neutral or disadvantageous. 

\begin{example}
Splitting can be advantageous, neutral or disadvantageous:
\begin{itemize}
\item \emph{Disadvantageous splitting}. In the WVG $[5;2,2,2]$ each player has a Banzhaf index of $1/3$. If the last player splits up into two players, the new game is $[5;2,2,1,1]$. In that case, the split-up players have a Banzhaf index of $1/8$ each. 
\item \emph{Neutral splitting}. In the WVG $[4;2,2,2]$ each player has a Banzhaf index of $1/3$. If the last player splits up into two players, the new game is $[4;2,2,1,1]$. In that case, the split-up players have a Banzhaf index of $1/6$ each.
\item \emph{Advantageous splitting}. In the WVG $[6;2,2,2]$ each player has a Banzhaf index of $1/3$. If the last player splits up into two players, the new game is $[6;2,2,1,1]$. In that case, the split-up players have a Banzhaf index of $1/4$ each.
\end{itemize}
\end{example}

We analyse the splitting of players in the unanimity WVG.

\begin{proposition}\label{unanimity-splitting}
In a unanimity WVG with $q=w(N)$, if Banzhaf indices are used as payoffs of agents in a WVG, then it is beneficial for an agent to split up into several agents.
\end{proposition}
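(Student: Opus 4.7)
The plan is to observe that the unanimity structure is preserved under splitting, and then to compute the Banzhaf indices in both games directly and compare.

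First I would note that in the unanimity WVG $[q; w_1, \dots, w_n]$ with $q = w(N)$, the only winning coalition is $N$ itself (assuming $w_i > 0$ for all $i$), and every player is critical in precisely that one coalition. Hence $\eta_i(v) = 1$ for every $i$, and $\beta_i = 1/n$.

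Next I would show the game remains a unanimity WVG after splitting. If player $i$ splits into $k \geq 2$ sub-players with positive weights summing to $w_i$, then the total weight $w(N)$ and the quota $q = w(N)$ are both unchanged, so the new game $[q; w_1, \dots, w_{i-1}, w_{i+1}, \dots, w_n, w_i^{(1)}, \dots, w_i^{(k)}]$ is again a unanimity WVG, now on $n-1+k$ players. By the previous observation, each of the new players (sub-player or not) receives Banzhaf index $1/(n-1+k)$.

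The sum of the Banzhaf indices of the $k$ sub-players of the original player $i$ is therefore $k/(n-1+k)$, and a short calculation gives
\[
\frac{k}{n-1+k} > \frac{1}{n} \iff kn > n-1+k \iff k(n-1) > n-1 \iff k > 1,
\]
so any split into $k \geq 2$ sub-players is strictly beneficial. The result is essentially a one-line calculation; the only conceptual point, and what I would highlight, is the preservation of the unanimity property (so that the Banzhaf indices remain trivially uniform), rather than any genuine combinatorial obstacle.
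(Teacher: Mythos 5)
Your proof is correct and follows essentially the same route as the paper's: both observe that the split game is again a unanimity WVG, so every player's Banzhaf index is the reciprocal of the (new) number of players, and then compare $k/(n-1+k)$ with $1/n$. The only remark worth making is that your final equivalence $k(n-1)>n-1 \iff k>1$ implicitly assumes $n>1$ (for $n=1$ the split is neutral, not beneficial); the paper makes this same restriction explicit with its ``for $n>1$'' clause.
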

\begin{proof}
In a WVG with $q=w(N)$, the Banzhaf index of each player is $1/n$. Let player $i$ split up into $m+1$ players. In that case there is a total of $n+m$ players and the Banzhaf index of each player is $1/(n+m)$. In that case the total Banzhaf index of the split up players is $\frac{m+1}{n+m}$, and for $n>1$, $\frac{m+1}{n+m}> \frac{1}{n}$. An exactly similar analysis holds for Shapley-Shubik index.
\end{proof}
However there is the  same motivation for all agents to split up into smaller players which would return the agents to parity.
\subsection{General case}

We recall that a player is critical in a winning coalition if the player's exclusion makes the coalition losing. We will also say that a player is \emph{critical for} a losing coalition $C$ if the player's inclusion results in the coalition winning.

    

\begin{proposition}\label{split-bound-proposition}
Let WVG $v$ be $[q;w_1,\ldots,w_n]$. If $v$ transforms to $v'$ by the splitting of player $i$ into $i'$ and $i''$, then
${\beta_{i'}}(v')+ {\beta_{i''}}(v')\leq 2{\beta_i}(v)$. Moreover, this upper bound is asymptotically tight. 
\end{proposition}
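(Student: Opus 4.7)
The plan is to compare the critical-coalition counts $\eta_j$ in $v$ and in $v'$ directly, and to convert to normalized Banzhaf indices only at the end. Writing $w_{i'}+w_{i''}=w_i$, I would fix $T\subseteq N\setminus\{i\}$ and group together the four augmentations $T,\,T\cup\{i'\},\,T\cup\{i''\},\,T\cup\{i',i''\}$ arising in $v'$ according to membership of $\{i',i''\}$. The first step would be to establish the exact identity $\eta_{i'}(v')+\eta_{i''}(v')=2\eta_i(v)$. Taking without loss of generality $w_{i'}\ge w_{i''}$, the interval $[q-w_i,\,q)$ partitions into $[q-w_i,\,q-w_{i'})$, $[q-w_{i'},\,q-w_{i''})$ and $[q-w_{i''},\,q)$. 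A direct case check on the three sub-intervals (and on the complement of $[q-w_i,q)$) shows that the four augmentations together always contribute exactly $2$ to $\eta_{i'}(v')+\eta_{i''}(v')$ whenever $w(T)\in[q-w_i,q)$ and $0$ otherwise; since $\eta_i(v)=|\{T\subseteq N\setminus\{i\}:w(T)\in[q-w_i,q)\}|$, summing over $T$ gives the identity.

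The second step is a monotonicity check for the other players: $\eta_k(v')\ge\eta_k(v)$ for every $k\ne i$. Partition the subsets of $N\setminus\{k\}$ in $v'$ by their restriction $T\subseteq N\setminus\{i,k\}$: the two augmentations $T$ and $T\cup\{i',i''\}$ have the same weights $w(T)$ and $w(T)+w_i$ as the subsets $T$ and $T\cup\{i\}$ of $v$, so contribute identically to the criticality of $k$, while the extra augmentations $T\cup\{i'\}$ and $T\cup\{i''\}$ can only add non-negative contributions. Combining the two steps gives $\sum_j\eta_j(v')\ge\sum_j\eta_j(v)+\eta_i(v)\ge\sum_j\eta_j(v)$, so
\[
\beta_{i'}(v')+\beta_{i''}(v')=\frac{\eta_{i'}(v')+\eta_{i''}(v')}{\sum_j\eta_j(v')}=\frac{2\eta_i(v)}{\sum_j\eta_j(v')}\le\frac{2\eta_i(v)}{\sum_j\eta_j(v)}=2\beta_i(v).
\]

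For asymptotic tightness I would apply the bound to the unanimity game $[W;w_1,\ldots,w_n]$ with $W=\sum_j w_j$, in which every player has Banzhaf index $1/n$ (cf.\ Proposition~\ref{unanimity-splitting}). Splitting any player into two sub-players produces another unanimity game on $n+1$ voters with all indices $1/(n+1)$, so the ratio $(\beta_{i'}(v')+\beta_{i''}(v'))/(2\beta_i(v))$ equals $n/(n+1)$, which tends to $1$ as $n\to\infty$.

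I expect the main obstacle to be the careful bookkeeping in the first step: keeping the strict/non-strict inequality conventions in the definition of ``critical'' consistent across the three sub-intervals, and verifying that the identity survives the degenerate cases $w_{i'}=w_{i''}$ or $w_{i''}=0$, in which one or two of the sub-intervals collapse but the total contribution of $2$ per relevant $T$ is preserved. Everything else is then routine: step 2 is essentially a tautology once the augmentations of $v'$ are matched to those of $v$, and the final inequality is pure algebra.
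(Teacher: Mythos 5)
Your argument for the inequality is correct and essentially identical to the paper's: the same exact identity $\eta_{i'}(v')+\eta_{i''}(v')=2\eta_i(v)$ established by a three-case analysis on where $q-w(T)$ falls relative to $w_{i'}$ and $w_{i''}$, the same monotonicity $\eta_x(v')\geq\eta_x(v)$ for $x\neq i$ obtained by matching $T$ and $T\cup\{i',i''\}$ in $v'$ with $T$ and $T\cup\{i\}$ in $v$, and the same final manipulation of the normalizing denominator. The only genuine divergence is the tightness witness. The paper exhibits the game $[n;2,1,\ldots,1]$ on $n+1$ players, computes $\beta_1\sim 1/n$ directly from the critical-coalition counts, and checks that splitting player $1$ into two unit players yields $\beta_{1'}+\beta_{1''}=2/(n+2)\sim 2\beta_1$. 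You instead invoke the unanimity game, where all the index computations are trivial ($1/n$ before, $1/(n+1)$ each after, ratio $n/(n+1)\to 1$); this is simpler and equally valid, and it dovetails with Proposition~\ref{unanimity-splitting}, at the cost of using a maximally degenerate game rather than one with a genuinely dominant player. Do make explicit that both sub-players receive strictly positive weight in the unanimity example, since a zero-weight sub-player is a dummy there and the ratio would collapse to $1/2$; with that caveat your plan goes through.
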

\begin{proof}
We assume that a player $i$ splits up into $i'$ and $i''$ and that $w_{i'}\leq w_{i''}$. We consider a losing coalition $C$ for which $i$ is critical in $v$. Then $w(C)<q\leq w(C)+w_i=w(C)+w_{i'}+w_{i''}$. 
\begin{itemize}
\item If $q-w(C)\leq w_{i'}$, then $i'$ and $i''$ are critical for $C$ in $v'$. 
\item If $w_{i'}<q-w(C)\leq w_{i''}$, then $i'$ is critical for $C\cup\{i''\}$ and $i''$ is critical for $C$ in $v'$. 
\item If $q-w(C)>w_{i''}$, then $i'$ is critical for $C\cup\{i''\}$ and $i''$ is critical for $C\cup\{i'\}$ in $v'$. 
\end{itemize}    
Therefore we have $\eta_{i'}(v')+\eta_{i''}(v')=2{\eta_i}(v)$ in each case.

Now we consider a player $x$ in $v$ which is other than player~$i$. If $x$ is critical for a coalition $C$ in $v$ then $x$ is also critical for the
corresponding coalition $C'$ in $v'$ where we replace $\{i\}$ by $\{i',i''\}$. Hence $\eta_x(v) \leq \eta_x(v')$. Of course $x$ may also be critical for some coalitions in $v'$ which contain just one of $i'$ and $i''$, so the above inequality will not in general be an equality. Moreover, 
\begin{eqnarray*}
{\beta_{i'}}(v')+{\beta_{i''}}(v')&=&\frac{{2\eta_{i}}(v)}{{2\eta_{i}}(v)+\sum_{x\in N(v') \setminus \{i', i''\}}{\eta_x}(v')}\\
&\leq&\frac{{2\eta_{i}}(v)}{2\eta_{i}(v)+\sum_{x\in N(v) \setminus \{i\}}{\eta_x}(v)} \\
&\leq&\frac{{2\eta_{i}}(v)}{\eta_{i}(v)+\sum_{x\in N(v) \setminus \{i\}}{\eta_x}(v)}=2{\beta_i}(v) 
\end{eqnarray*}

We can prove that this coefficient of $2$ is best possible. We take a WVG $[n;2,1,\ldots,1]$ with $n+1$ players. We find that ${\eta_{1}}=n+{n \choose 2}$ and for all other $x$, $\eta_x=1+{n-1 \choose 2}$. Therefore 
$$\beta_1 = \frac{n+{n \choose 2}}{n+{n \choose 2}+n(1+{n-1 \choose 2})}
 = \frac{n+1}{(n-2)^2} \sim 1/n .$$ 
 In case player $1$ splits up into $1'$ and $1''$ with weights $1$ each, then for all players $j$, $\beta_j=\frac{1}{n+2}$. Thus for large $n$, $\beta_{1'}+\beta_{1''}=\frac{2}{n+2} \sim 2\beta_1$.
\end{proof}

Moreover, we show that splitting into two players can decrease the Banzhaf index payoff by as much as a factor of almost $\sqrt{\frac{\pi}{2n}}$:

\begin{example}
\emph{Disadvantageous splitting}.
We take a WVG $v$ on $n$ players  where $v=[3n/2;2n,1,\dots ,1]$. For the sake of simplicity, we assume that $n$ is even. It is easy to see that player $1$ is a dictator. Now we consider the case where $v$ changes into $v'$ with player $1$ splitting up into $1'$ and $1''$ with weight $n$ each. For player $1'$ to be critical for a losing coalition in $v'$, the coalition much exclude $1''$ and have from $n/2$ to $n-1$ players with weight $1$ or it must include $1''$ and have from $0$ to $(n/2-1)$ players with weight 1. So $\eta_{1'}(v')=\eta_{1''}(v')=\sum_{i=0}^{n}{n-1 \choose i}=2^{n-1}$. Moreover, for a smaller player $x$ with weight $1$ to be critical for a coalition in $v'$, the coalition must include only one of $1'$ or $1''$ and $(n-2)/2$ of the $n-2$ other smaller players. So, $\eta_{x}(v')=2{n-2\choose (n-2)/2}$. By using Stirling's formula, we can approximate $\eta_{x}(v')$ by $\sqrt{\frac{2}{\pi{(n-2)}}}2^{n-1}$. We see that:

\begin{eqnarray*}
\beta_{i'}(v')&=&\beta_{i''}(v')\\
&\approx& \frac{2^{n-1}}{2^{n-1}+2^{n-1}+(n-1)\sqrt{\frac{2}{\pi{(n-2)}}}2^{n-1}}\\
&=&\frac{1}{2+\frac{(n-1)}{\sqrt{n-2}}\sqrt{\frac{2}{\pi}}}\\
&\sim&\sqrt{\frac{\pi}{2n}}.
\end{eqnarray*}


\end{example}

We notice that the bounds on the effect of splitting on the Banzhaf index are quite similar to those in the Shapley-Shubik case.

\section{Complexity of finding a beneficial split}\label{AAMAS-COMPLEXITY-SECTION}

From a computational perspective, it is \#P-hard for a manipulator to find the ideal splitting to maximize his payoff. An easier question is to check whether a beneficial split exists or not. We define a Banzhaf version of the BENEFICIAL SPLIT problem defined in \cite{edithideas}. \\

\noindent
\textbf{Name}: BENEFICIAL-BANZHAF-SPLIT   \\
\textbf{Instance}: $(v,i)$ where $v$ is the WVG $v=[q;w_1,\ldots, w_n]$ and player $i\in \{1,\ldots, n\}$.\\
\textbf{Question}: Is there a way for player $i$ to split his weight $w_i$ between sub-players $i_1, \ldots, i_m$ so that, in the new game $v'$, $\sum_{j=1}^{m}\beta_{i_j}(v')> \beta_{i}(v)$?\\

\begin{proposition}\label{BENEFICIAL-BZ-SPLIT}
BENEFICIAL-BANZHAF-SPLIT is NP-hard, and remains NP-hard even if the player can only split into two players with equal weights.
\end{proposition}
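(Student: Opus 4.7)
The plan is a polynomial-time reduction from \textsc{Partition}: given positive integers $a_1,\ldots,a_n$ with $\sum_j a_j = 2K$, decide whether some subset sums to $K$. I would construct a WVG $v$ with a distinguished splitting player $i^*$ of weight exactly $2$. Since the only integer splitting of a weight-$2$ player into two or more parts is into two subplayers of weight $1$, the general and the restricted (equal two-way) forms of \textsc{Beneficial-Banzhaf-Split} agree on the constructed instances, so one reduction suffices for both halves of the statement.

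The core template is $v = [q;\; 2,\; 2a_1,\ldots,2a_n,\; \text{padding}]$. With the $a_j$'s scaled by $2$ and a suitable quota $q$ (e.g.\ $q = 2K+2$ before padding), player $i^*$ is critical precisely for coalitions of the $2a_j$'s whose total weight equals $2K$, so $\eta_{i^*}(v)$ equals the number $P$ of \textsc{Partition} solutions. By the identity $\eta_{i'}(v')+\eta_{i''}(v')=2\eta_{i^*}(v)$ from the proof of Proposition~\ref{split-bound-proposition}, the two weight-$1$ subplayers have combined critical count $2P$ in $v'$. If $P=0$, both $\beta_{i^*}(v)$ and $\beta_{i'}(v')+\beta_{i''}(v')$ vanish, so the split is trivially non-beneficial; if $P\ge1$, a short algebraic manipulation using the same identity turns the beneficial-split inequality into the clean comparison
\[
\sum_{x\neq i^*}\eta_x(v) \;>\; \sum_{x\neq i^*}\bigl(\eta_x(v')-\eta_x(v)\bigr),
\]
which the reduction must arrange to hold by construction. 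The reduction's success then yields the equivalence ``split beneficial $\iff P\ge1$,'' giving NP-hardness.

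The main obstacle I expect is making the displayed inequality hold whenever $P\ge1$. Splitting $i^*$ grants each other player $x$ new criticality for every coalition of weight in a narrow window that contains exactly one of $i',i''$, and the bare template does not automatically guarantee that these new criticality counts are dominated by the original critical counts (indeed, a naive check on the ``all ones'' instance $a_1=\cdots=a_n=1$ shows the wrong sign without padding). I would address this with a small collection of padding players: ``blocker'' voters of large weight that restrict which coalitions can lie near the quota at all, combined with parity arguments exploiting the fact that each $2a_j$ is even so that the weight-$1$ offset contributed by a single subplayer flips the parity of the coalition sum and rules out many candidate critical coalitions. Tuning the padding so that $\sum_{x\neq i^*}\eta_x(v)$ is an explicitly computable constant much larger than the increment on the right-hand side, with both sides depending on the \textsc{Partition} instance only through lower-order terms, is the delicate combinatorial core of the argument; once it is in place, the reduction is clearly polynomial-time and the weights stay polynomial in the input.
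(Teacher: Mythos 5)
Your skeleton is the same as the paper's: a reduction from PARTITION with a distinguished splitting player of weight $2$ (so that the only nontrivial integer split is $1+1$, which handles both halves of the statement at once), the identity $\eta_{i'}(v')+\eta_{i''}(v')=2\eta_{i^*}(v)$ from Proposition~\ref{split-bound-proposition}, the observation that a `no' instance makes $i^*$ and its sub-players dummies, and the reformulation of ``the split is beneficial'' as $2\sum_{x\neq i^*}\eta_x(v) > \sum_{x\neq i^*}\eta_x(v')$. All of that is correct. The paper differs only in the scaling: it uses weights $8a_i$, quota $4\sum_i a_i+2$ and no padding players, so that arithmetic modulo $8$ confines the influence of the small weights to coalitions whose large part sums to exactly $4\sum_i a_i$.

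The proposal is nevertheless not a proof, because the one step that carries all the difficulty --- exhibiting a concrete instance family on which the inequality $2\sum_{x\neq i^*}\eta_x(v) > \sum_{x\neq i^*}\eta_x(v')$ holds \emph{strictly} for every `yes' instance --- is explicitly deferred to unspecified ``blocker'' players and ``parity arguments.'' This is a missing idea, not a routine detail. The structural reason is the one you half-identify: writing $\eta_x(v')=\eta_x(v)+\delta_x$, where $\delta_x$ counts the coalitions containing exactly one of $i',i''$ for which $x$ is critical, the condition becomes $\sum_x \delta_x < \sum_x \eta_x(v)$, and in these modular constructions the one-sub-player coalitions pair up so naturally with the original critical coalitions that one tends to get $\delta_x=\eta_x(v)$ exactly, i.e.\ a neutral rather than beneficial split. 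Indeed, even the paper's own count at this point contains a slip: the claim that $i$ critical in $C\subseteq\{1,\dots,n-1\}$ implies $i$ critical in $C\cup\{n\}$ fails precisely when $C\setminus\{i\}$ sums to $4\sum_i a_i$, and after correcting for this one finds $\eta_i(v)=2y_i$ and $\eta_i(v')=4y_i$, so the paper's reduction also yields exact neutrality (check $[10;8,8,2]\rightarrow[10;8,8,1,1]$: the small player's total Banzhaf index is $1/3$ before and after). So the symmetry you are worried about is precisely the obstruction, it is not broken by rescaling alone, and any complete argument must add an ingredient forcing the aggregate growth factor of the other players' criticality counts to be strictly below $2$; until such a construction is exhibited and verified, the reduction remains a plan.
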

\begin{proof}
We prove this by a reduction from an instance of the classical NP-hard PARTITION problem to BENEFICIAL-BZ-SPLIT.
\\
\\
\textbf{Name}: PARTITION \\
\textbf{Instance}: A set of $k$ integer weights $A=\{a_1, \ldots, a_k \}$.\\
\textbf{Question}: Is it possible to partition $A$, into two subsets $P_1\subseteq A$, $P_2\subseteq A$ so that $P_1\cap P_2=\emptyset$ and $P_1\cup P_2=A$ and $\sum_{a_i\in A_1}a_i=\sum_{a_i\in A_2}a_i$?\\

Given an instance of PARTITION $\{a_1, \ldots, a_k\}$, we can transform it to a WVG $v=[q;w_1,\ldots, w_n]$ with $n=k+1$ where $w_i=8{a_i}$ for $i=1$ to $n-1$, $w_{n}=2$ and $q=4\sum_{i=1}^{k}a_i +2$. After that, we want to see whether it can be beneficial for player $n$ with weight $2$ to split 
into two sub-players $n$ and $n+1$ each with weight $1$ to form a new WVG $v'=[q;w_1,\ldots, w_{n-1},1,1]$. Note that, since the weights are integral, it is certainly not beneficial to split up a weight of $2$ other than into $1$ and $1$.

If $A$ is a `no' instance of PARTITION, then we see that no subset of the weights $\{w_1,\ldots, w_{n-1}\}$ can sum to $4\sum_{i}a_i$. This implies that player $n$ is a dummy. We see that even if player $n$ splits into sub-players, the sub-players are also dummies. Therefore $(v,n)$ is a `no' instance of BENEFICIAL-BZ-SPLIT.

Now let us assume that $A$ is a `yes' instance of PARTITION. In that case, let the number of subsets of weights $\{w_1,\ldots w_{n-1}\}$ summing to $4\sum_{i}a_i$ be $x$. Then $\eta_{n}(v)=x$. For $i\leq n-1$, player $i$ can be critical in winning coalition with weight exactly $q$ or more than $q$. 
We note that exactly half of the $x$ subsets of $\{w_1,\ldots w_{n-1}\}$ summing to 
$4\sum_{i}a_i$ contain $w_i$. If player $i$ is critical in a coalition $C$ which is a subset 
of $\{w_1,\ldots w_{n-1}\}$ then $i$ is also critical in $C \cup \{w_n\}$.
Therefore for $i\leq n-1$, $\eta_i(v)=\frac{x}{2}+2y_i$ where $y_i$ is the number of subsets of $\{w_1,\ldots w_{n-1}\}$ in which $i$ is critical. We see that 
$$\beta_{n}(v)=\frac{x}{x+\frac{kx}{2}+2y} 
{\rm\ where\ } \sum_{i\leq n-1}y_i=y.$$

However, in the new game $v'$, $\eta_{n}(v')=\eta_{n+1}(v')=x$ and for $i\leq n-1$, $\eta_i(v')=\frac{x}{2}+4y_i$, since there are now $4$ coalitions, $C,C\cup \{w_n\},
C\cup\{w_{n+1}\},C\cup\{w_n,w_{n+1}\}$, corresponding to each $C$. 
So 
$$\beta_n(v')+\beta_{n+1}(v')=\frac{2x}{2x+kx/2+4y}>\beta_{n}(v),$$
 since $x>0$. 
Thus, a `yes' instance of PARTITION implies a `yes' instance of BENEFICIAL-BZ-SPLIT.
\end{proof}

In terms of minimizing chances of manipulation, we see that computational complexity acts as a barrier. This idea of using computational complexity to model bounded rationality is well explained by Papadimitriou and Yannakakis \cite{195445}. In the context of complexity of voting, it was a series of groundbreaking papers by Bartholdi, Orlin, Tovey, and Trick \cite{bartholdi1991, bartholdi1989a, bartholdi1989b, bartholdi1992} that showed how important computationally complexity consideration is in terms of ease of computing winners and difficulty of manipulation.

\subsection{Pseudopolynomial algorithm}

It is well known that, although computing Banzhaf indices of players in a WVG is NP-hard, there are polynomial time algorithms using dynamic programming~\cite{matsui00survey} or generating functions~\cite{bilbao_generating} to compute Banzhaf indices if the weights of players are polynomial in $n$. Let this pseudo-polynomial algorithm be $\mathsf{BanzhafIndex}(v,i)$ which takes a WVG $v$ and an index $i$ as input and returns $\beta_i(v)$, the Banzhaf index of player $i$ in $v$. We use a similar argument as in \cite{edithideas} to show that a polynomial algorithm exists to find a beneficial split if the weights of players are polynomial in $n$ and the player $i$ in question can split into up to a constant $k$ number of sub-players with integer weights. Algorithm \ref{BeneficialSplitInWVG} takes as input a WVG $v$ and player $i$ which can split into a maximum of $k$ number of players. The algorithm returns `no' if no beneficial split exists and returns the optimal split otherwise. Whenever player $i$ in WVG $v$ splits according to a split $s$, we denote the new game by $v_{i,s}$.

We see that the total number of splits for player $i$ is equal to $q(w_i,k)$ where $q(n,k)$ is the partition function which gives the number of partitions of $n$ with $k$ or fewer addends. It is clear that for a constant $k$, the number of splits of player $i$ is less than ${(w_i)}^k$ which is a polynomial in $n$. Since the computational complexity for each split is also a polynomial in $n$, therefore Algorithm~\ref{BeneficialSplitInWVG} is  polynomial in $n$ if the weights are polynomial in $n$.

\begin{algorithm}[thb]
  \caption{BeneficialSplitInWVG}
  \label{BeneficialSplitInWVG}
  \textbf{Input:} $(v,i)$ where $v=[q;w_1,\ldots,w_n]$ and $i$ is the player which wants to split into a maximum of $k$ sub-players.

\textbf{Output:} Returns NO if there is no beneficial split. Otherwise returns the optimal split $(w_{i_1},\ldots,w_{i_{k'}})$ 
where $k'\leq k$, and $\sum_{j=1}^{k'}w_{i_j}=w_i$.

   \begin{algorithmic}[1]
   \STATE $\mathsf{BeneficialSplitExists}= \FALSE$
   \STATE $\mathsf{BestSplit}=\emptyset$
   \STATE $\mathsf{BestSplitValue}=-\infty$
   \STATE $\beta_i= \mathsf{BanzhafIndex}(v,i)$
   \FOR{$j=2$ to $k$}
 		\FOR{Each possible split $s$ where $w_i=w_{i_1}+\ldots+w_{i_j}$}
 		\STATE $\mathsf{SplitValue}= \sum_{a=1}^{j}\mathsf{BanzhafIndex}{(v_{i,s},i_a)}$
  \IF{$\mathsf{SplitValue}>\beta_i$}
  \STATE $\mathsf{BeneficialSplitExists}=\TRUE$
  \IF{$\mathsf{SplitValue}>\mathsf{BestSplitValue}$}
  \STATE $\mathsf{BestSplit}=s$
  \STATE $\mathsf{BestSplitValue}=\mathsf{SplitValue}$
  \ENDIF 
   \ENDIF 
  
 	\ENDFOR
 	\ENDFOR 
  \IF{$\mathsf{BeneficialSplitExists}= \FALSE$}
   \RETURN $\FALSE$
   \ELSE
  \RETURN $\mathsf{BestSplit}$ 
  \ENDIF 
  \end{algorithmic}
\end{algorithm}

\section{Merging and annexation}\label{MERGING-SECTION-AAMAS}

For the case of players merging to gain advantage, we examine two cases. One is annexation where one voter takes the voting weight of other voters. The annexation is advantageous if the payoff of the new merged coalition in the new game is greater than the payoff of the annexer in the original game. The other case is voluntary merging where voters merge to become a bloc for which their new payoff is more than the sum of their individual payoffs. For every game $(N,v)$, the result of the merging of players in coalition $S$ is another game $((N\setminus S)\cup \{\&S\},v_{\&S})$.

We define the problem of checking a beneficial voluntary merge or annexation:\\

\noindent
\textbf{Name}: BENEFICIAL-BZ-MERGE\\
\textbf{Instance}: $(v,S)$ where $v$ is the WVG $v=[q;w_1,\ldots, w_n]$ and $S\subset N$. \\
\textbf{Question}: Suppose coalition $S$ merges to form a new game $((N\setminus S)\cup \{\&S\},v_{\&S})$. Is $\beta_{\&S}(v_{\&S})>\sum_{i\in S}\beta_i(v)$?\\


\noindent
\textbf{Name}: BENEFICIAL-BZ-ANNEXATION\\
\textbf{Instance}: $(v,S,i)$ where $v$ is the WVG $v=[q;w_1,\ldots, w_n]$, $i$ is the $i$th player in $v$ and $S\subset (N\setminus \{i\})$. \\
\textbf{Question}: If $i$ annexes coalition $S$ to form a new game $((N\setminus (S\cup\{i\}))\cup \{\&(S\cup\{i\})\},v_{\&(S\cup\{i\})})$, is $\beta_i(v_{\&(S\cup\{i\})})>\beta_i(v)$?\\

If Shapley-Shubik indices are used as payoffs in place of Banzhaf indices, then  the corresponding problems are defined with BZ replaced by SS so that BENEFICIAL-SS-MERGE corresponds to BENEFICIAL-BZ-MERGE. 
Felsenthal and Machover \cite{mainbook} prove that if a player annexes other players, then it cannot be the case that the annexation is disadvantageous if the Shapley-Shubik indices are used as payoffs. We provide a clear and simple proof of this theorem. Let player $i$ be critical for a coalition $S$ in WVG $v$. Then the contribution to $\phi_i(v)$ from this is $\frac{(|S|-1)!(n-|S|)!}{n!}$. We consider a game $v_{\&\{i,j\}}$ where $i$ annexes $j$. For every $S$ for which $i$ is critical in $v$, the contribution to $\phi_{\&\{i,j\}}(v_{\&\{i,j\}})$ is either $\frac{(|S|-2)!(n-|S|)!}{(n-1)!}$ or $\frac{(|S|-1)!(n-|S|-1)!}{(n-1)!}$. For either case we see that $\phi_{\&\{i,j\}}(v_{\&\{i,j\}})>\phi_i(v)$. However Felsenthal and Machover \cite{mainbook} show that, for the case of the Banzhaf index, annexation could be disadvantageous. They provide a 13-player WVG for which annexation is disadvantageous, which is the simplest example they could find. We provide an 8-player WVG where annexation is disadvantageous:

\begin{example}
In WVG $[13;7,6,1,1,1,1,1,1]$, player $1$ has Banzhaf index $0.48507$. If player $1$ annexes one of the small players, the new game is $[13;8,6,1,1,1,1,1]$ and the Banzhaf index becomes $0.47826$.
\end{example}

For the case where the merging is voluntary instead of an annexation,  for both the Banzhaf index and Shapley-Shubik index, merging can be advantageous or disadvantageous. As in the case of splitting, we expect it to be hard to find a beneficial merge:

\begin{proposition}\label{BENEFICIAL-BZ-MERGE}
BENEFICIAL-BZ-MERGE is NP-hard.
\end{proposition}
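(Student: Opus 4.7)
I would prove NP-hardness by a polynomial-time reduction from PARTITION, in the spirit of Proposition~\ref{BENEFICIAL-BZ-SPLIT} but using three would-be merging players instead of two. Given a PARTITION instance $A=\{a_1,\ldots,a_k\}$ with $\sum_i a_i = 2T$, set
\[
v = [8T+3;\ 8a_1,\ldots,8a_k,1,1,1]
\]
and let $S$ be the three unit-weight players, so that $v_{\&S} = [8T+3;\ 8a_1,\ldots,8a_k,3]$. I claim that merging $S$ in $v$ is beneficial if and only if $A$ is a yes-instance.

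For the `no' direction, if no subset of $A$ sums to $T$, then no coalition of $v$ can attain weight exactly $8T+3$: any such weight decomposes as $8\sigma+u$ with $\sigma$ a subset-sum of $A$ and $u\in\{0,1,2,3\}$ the number of units in the coalition, and the equation $8\sigma+u=8T+3$ forces $u=3$ and $\sigma=T$. Hence each unit in $v$ and the merged player in $v_{\&S}$ are dummies, both sides of the BENEFICIAL-BZ-MERGE inequality are zero, and the merger is not beneficial.

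For the `yes' direction, let $x>0$ be the number of subsets of $A$ summing to $T$. A direct count, patterned on the enumeration in Proposition~\ref{BENEFICIAL-BZ-SPLIT}, gives $\eta_l(v)=x$ for each unit $l$ and $\eta_m(v_{\&S})=x$ for the merged player; and for each non-unit $i\le k$, $\eta_i(v)=7A_i+B_i$ while $\eta_i(v_{\&S})=A_i+B_i$, where $A_i$ and $B_i$ count the subsets of $\{a_j:j\ne i\}$ summing into the windows $[T-a_i+1,T]$ and $[T-a_i,T-1]$ respectively. The PARTITION complementation $\sum_j a_j=2T$ forces $A_i=B_i$ for every $i$, so $\eta_i(v)=8A_i$ and $\eta_i(v_{\&S})=2A_i$. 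The merger condition therefore reduces to
\[
\frac{x}{x+2\sum_i A_i}\;>\;\frac{3x}{3x+8\sum_i A_i},
\]
which in turn simplifies to $2x\sum_i A_i>0$. Since $A_i\ge N_i(T)=x/2>0$ for each $i$ (where $N_i(t)$ counts subsets of $\{a_j:j\ne i\}$ summing to $t$), this inequality is strict.

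The heart of the argument is the asymmetry between the factor $4$ by which the non-unit critical counts scale from $v_{\&S}$ back to $v$ and the factor $3$ by which the merging players' total critical count scales. Three of the four unit-count values $u\in\{0,1,2,3\}$ land in the $A_i$-window (because of a parity argument tied to the choice of quota $8T+3$) while only $u=3$ lands in the $B_i$-window, giving the coefficient $1+3+3=7$ on $A_i$ and the coefficient $1$ on $B_i$; the three-to-one player collapse contributes the factor $3$ on the merging side; and $4>3$ is what buys the strict merger benefit that powers the reduction.
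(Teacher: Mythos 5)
Your reduction is correct, and while it draws from the same well as the paper --- a PARTITION gadget with weights $8a_i$, unit-weight fillers, and a quota just above half the total --- the mechanism that makes the merge beneficial is genuinely different. The paper also introduces three unit players but merges only \emph{two} of them, with quota $8T+2$: under that merge every player's critical count scales by exactly $1/2$ except that of the stranded third unit, which collapses from $2x$ to $0$ (it becomes a dummy for residue reasons), and it is this vanishing term in the normalising denominator that yields the strict gain. You instead merge all three units with quota $8T+3$, so nothing is dummified; the gain comes from differential scaling. Your window computation is right: a big player's count is $(1+3+3)A_i+B_i$ before and $A_i+B_i$ after; the complementation bijection on $\{a_j: j\neq i\}$ (total $2T-a_i$) carries the window $[T-a_i+1,T]$ onto $[T-a_i,T-1]$, so $A_i=B_i$; hence the big players' counts shrink by a factor of $4$ while the merging trio's total shrinks only by a factor of $3$, reducing the comparison to $2x\sum_i A_i>0$. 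You also correctly supply $A_i\ge x/2>0$, which is genuinely needed, since $\sum_i A_i=0$ would make the two sides equal rather than the merge beneficial. The `no' direction is handled correctly in both proofs. The paper's route is arithmetically lighter; yours exposes a cleaner structural cause ($4>3$) at the cost of the $A_i=B_i$ lemma. The only loose phrasing is calling the window split a ``parity argument'': it is really that the criticality constraints for $u\in\{0,1,2\}$ round to the same integer window while those for $u=3$ round to the shifted one, which your explicit computation already establishes.
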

\begin{proof}
Given an instance of PARTITION $\{a_1, \ldots, a_k\}$, we can transform it to a WVG $v=[q;w_1,\ldots, w_n]$ where $n=k+3$, $w_i=8{a_i}$ for $i=1$ to $n-3$, $w_{n-2}=w_{n-1}=w_n=1$, and $q=4\sum_{i=1}^{k}a_i+2$.  

If $A$ is a `no' instance of PARTITION, then we see that a subset of weights $\{w_1,\ldots w_{n-3}\}$ cannot sum to $4\sum_{i}a_i$. This implies that players $(n-2),(n-1)$ and $n$ are dummies. Even if players $n$ and $(n-1)$ merge together, the new player $\&\{n-1,n\}$ remains a dummy in the new game $v_{\&\{n-1,n\}}$. 

Now let us assume that $A$ is a `yes' instance of PARTITION. In that case, let the number of subsets of weights $\{w_1,\ldots w_{n-3}\}$ summing to $4\sum_{i}a_i$ be $x$. For $i\leq n-3$, player $i$ can be critical in winning coalitions with weight $q$ or $q+1$ 
or more than $q+1$. The number of coalitions for the first two cases are $3x/2$ and $x/2$, 
respectively, corresponding to the participation of either $2$ or $3$ of the unit players. 
The third case corresponds to coalitions in which the three unit players are dummies. 
Therefore for $i\leq n-3$, $\eta_i=\frac{4x}{2}+8y_i$ where $y_i$ is the number of subsets of $\{w_1,\ldots,w_{n-3}\}$ in which $i$ is critical. Moreover, $\eta_{n-2}(v)=\eta_{n-1}(v)=\eta_{n}(v)=2x$, since each unit player is critical only when exactly one other of these 
is in the coalition. 
Then 
$$\beta_{n}(v) = \frac{2x}{6x+\frac{4kx}{2}+8y}, 
{\rm\ where\ } \sum_{i\leq n-3}y_i=y.$$

In the new game $v_{\&\{n-1,n\}}$, $\eta_{\&\{n-1,n\}}(v_{\&\{n-1,n\}})$ is $2x$ but $\eta_{n-2}(v_{\&\{n-1,n\}})$ is $0$. For $i\leq n-3$, $\eta_{i}(v_{\&\{n-1,n\}})$ is $\frac{2x}{2}+4y_i$. We see that 
$$\beta_{\&\{n-1,n\}}(v_{\&\{n-1,n\}}) = \frac{2x}{2x+2kx/2+4y}.$$ 
Therefore, 
$$\beta_{\&\{n-1,n\}}(v_{\&\{n-1,n\}})> \beta_{n}(v)+\beta_{n-1}(v),$$ 
which means that $n$ and $(n-1)$ had a beneficial merge. It has been shown that a `yes' instance of PARTITION implies a `yes' instance of BENEFICIAL-BZ-MERGE.
\end{proof}

\begin{proposition}
BENEFICIAL-BZ-ANNEXATION is \allowbreak NP-hard.
\end{proposition}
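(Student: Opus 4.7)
The plan is to reduce from PARTITION using essentially the same construction that worked for Proposition~\ref{BENEFICIAL-BZ-MERGE}. Given a PARTITION instance $\{a_1,\ldots,a_k\}$, I will build the WVG $v=[q;w_1,\ldots,w_n]$ with $n=k+3$, $w_i=8a_i$ for $i\leq k$, $w_{n-2}=w_{n-1}=w_n=1$, and $q=4\sum_{i=1}^{k}a_i+2$, and form the annexation instance $(v,\{n-1\},n)$: ask whether player $n$ should annex player $n-1$. The key observation is that the game $v_{\&\{n-1,n\}}$ resulting from this annexation is identical to the game obtained by merging $\{n-1,n\}$ in the merge proof, so all the criticality counts carry over verbatim.

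For the `no' direction, if PARTITION has no solution then no subset of the big weights $\{8a_1,\ldots,8a_k\}$ sums to $4\sum_i a_i$, which (as in the merge proof) forces the three unit players to be dummies in $v$ and the combined player $\&\{n-1,n\}$ to remain a dummy in $v_{\&\{n-1,n\}}$. Hence $\beta_n(v)=\beta_{\&\{n-1,n\}}(v_{\&\{n-1,n\}})=0$ and the annexation is not beneficial. For the `yes' direction, I can import the equalities established in the merge proof, namely $\beta_n(v)=\frac{2x}{6x+2kx+8y}$ and $\beta_{\&\{n-1,n\}}(v_{\&\{n-1,n\}})=\frac{2x}{2x+kx+4y}$ with $x>0$; a one-line comparison yields $\beta_{\&\{n-1,n\}}(v_{\&\{n-1,n\}})>2\beta_n(v)>\beta_n(v)$. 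Equivalently, the merge proof already shows $\beta_{\&\{n-1,n\}}(v_{\&\{n-1,n\}})>\beta_{n-1}(v)+\beta_n(v)\geq\beta_n(v)$, which is exactly the annexation-beneficial condition.

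There is no real obstacle here: the step from merging to annexation only relaxes the threshold for `beneficial' from $\beta_{n-1}(v)+\beta_n(v)$ to the smaller quantity $\beta_n(v)$, and the combinatorial analysis of the resulting game is identical. The construction is evidently polynomial-time in the size of the PARTITION instance, so NP-hardness of BENEFICIAL-BZ-ANNEXATION follows.
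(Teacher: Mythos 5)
Your proof is correct and follows essentially the same route as the paper: a reduction from PARTITION that reuses the merge construction and its criticality counts, observing that the annexation threshold $\beta_n(v)$ is weaker than the merge threshold $\beta_{n-1}(v)+\beta_n(v)$. The only cosmetic difference is that the paper uses $n=k+2$ (two unit players) rather than carrying over all three unit players from the merge gadget; both variants work.
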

\begin{proof}
Given an instance of PARTITION, $\{a_1, \ldots, a_k\}$, we can transform it to a WVG $v=[q;w_1,\ldots, w_n]$ where $n=k+2$, $w_i=8{a_i}$ for $i=1$ to $n-2$, $w_{n-1}=1$, $w_n=1$ and $q=4\sum_{i=1}^{k}a_i+2$. Just as in Proposition~\ref{BENEFICIAL-BZ-MERGE}, we see that a `no' instance of partition implies that $w_{n-1}$ and $w_{n}$ are dummies even if $n$ annexes $(n-1)$. However, a `yes' instance of partition implies that player $n$ benefits by annexing player $(n-1)$.
\end{proof}

\begin{proposition}
BENEFICIAL-SS-MERGE is NP-hard
\end{proposition}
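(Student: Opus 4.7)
The plan is to adapt the PARTITION reduction used in Proposition~\ref{BENEFICIAL-BZ-MERGE} to the Shapley-Shubik setting. Given an instance $\{a_1,\ldots,a_k\}$ of PARTITION, I would construct the same WVG $v=[q;w_1,\ldots,w_n]$ with $n=k+3$, $w_i = 8a_i$ for $i\le n-3$, $w_{n-2}=w_{n-1}=w_n=1$, and $q=4\sum_i a_i + 2$, and ask whether the coalition $\{n-1,n\}$ can beneficially merge.

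If PARTITION has no solution then, exactly as in the Banzhaf argument, no subset of the big weights $\{w_1,\ldots,w_{n-3}\}$ sums to $4\sum_i a_i$; hence all three unit players are dummies in $v$, and the merged player (of weight $2$) remains a dummy in $v_{\&\{n-1,n\}}$. Both sides of the inequality $\phi_{\&\{n-1,n\}}(v_{\&\{n-1,n\}}) > \phi_{n-1}(v)+\phi_n(v)$ are therefore $0$, so no beneficial merge exists.

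For the positive direction, I would use the pivotal-coalition form of the Shapley-Shubik index, in which $\phi_i$ is a sum over coalitions $S\ni i$ where $i$ is pivotal, weighted by $(|S|-1)!(N'-|S|)!/N'!$ with $N'$ the number of players in the game. The parity choice of $q$, together with the fact that big weights are multiples of $8$ while unit weights equal $1$, forces every pivotal coalition to be built from a partition-half $P$ of big weights summing to $4\sum_i a_i$, paired with an exactly determined count of unit players. Writing $m=|P|$ and summing over all such $P$, direct counting yields
\begin{align*}
\phi_{n-1}(v)+\phi_n(v) &= \frac{4}{(k+3)!}\sum_P (m+1)!\,(k+1-m)!, \\
\phi_{\&\{n-1,n\}}(v_{\&\{n-1,n\}}) &= \sum_P \frac{m!(k+1-m)! + (m+1)!(k-m)!}{(k+2)!}.
\end{align*}
The numerator of the second sum factors as $m!(k-m)![(k+1-m)+(m+1)] = m!(k-m)!(k+2)$, reducing the desired strict inequality to the term-by-term comparison $(k+3)(k+2) > 4(m+1)(k+1-m)$; this holds for all $0\le m\le k$ because the right-hand side is maximised at $m=k/2$ with value at most $(k+2)^2 < (k+3)(k+2)$. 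Since at least one partition-half exists when PARTITION has a solution, summing gives the required strict inequality and the merge is beneficial.

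The main obstacle I anticipate is the pivotal-coalition bookkeeping: verifying via the parity/divisibility argument that no pivotal coalition arises from anything other than a partition-half, and that the number of remaining unit players forced into the prefix is uniquely determined in each of the two cases (prefix weight exactly $q-2$ or $q-1$) for the merged player. Once this enumeration is pinned down, the remaining algebra collapses via the identity $(k+1-m)+(m+1)=k+2$ and an elementary AM-GM estimate on $(m+1)(k+1-m)$, so no further technical difficulties are expected.
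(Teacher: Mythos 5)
Your proposal is correct and follows essentially the same route as the paper: the identical PARTITION reduction with three unit players, the same dummy argument for the `no' direction, and for the `yes' direction the same per-partition comparison, since your pivotal-coalition inequality $(k+3)(k+2)>4(m+1)(k+1-m)$ is exactly the paper's $n(p_1+1+p_2+1)>4(p_1+1)(p_2+1)$ under $n=k+3$, $p_1=m$, $p_2=k-m$, settled by the same AM--GM bound. The only difference is that you phrase the bookkeeping via pivotal coalitions rather than permutations, which is a cosmetic reformulation.
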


\begin{proof}
Given an instance of PARTITION $\{a_1, \ldots, a_k\}$, we can transform it to a WVG $v=[q;w_1,\ldots, w_n]$ where $n=k+3$, $w_i=8{a_i}$ for $i=1$ to $n-2$, $w_{n-2}=w_{n-1}=w_n=1$, and $q=4\sum_{i=1}^{k}a_i+2$.

If $A$ is a `no' instance of PARTITION, then we see that a subset of weights $\{w_1,\ldots w_{n-3}\}$ cannot sum to $4\sum_{i}a_i$. This implies that players $(n-2)$, $(n-1)$ and $n$ are dummies. Even if player $n$ and $(n-1)$ merge together, the new player $\&\{n-1,n\}$ remains a dummy in the new game $v_{\&\{n-1,n\}}$.

Now let us assume that $A$ is a `yes' instance of PARTITION. For each partition $(P_1,P_2)$ where $|P_1|=p_1$ and $|P_2|=p_1$, we check the number of permutations corresponding to $(P_1,P_2)$. In the original game $v$, the contribution to the Shapley-Shubik payoff for either player $n$ or $(n-1)$ by the permutations corresponding to $(P_1,P_2)$ is 
$$\frac{2(p_1+1)!(p_2+1)!}{n!}=\frac{p_1!p_2!}{n!}2(p_1+1)(p_2+1).$$ 
If players $n$ and $n-1$ merge into bloc $\&\{n-1,n\}$, then the contribution to the Shapley-Shubik payoff to bloc $\&\{n-1,n\}$ by the permutations corresponding to $(P_1,P_2)$ is 
$$\frac{p_1!(p_2+1)!+(p_1+1)!p_2!}{(n-1)!}=\frac{p_1!p_2!}{n!}(n(p_1+1+p_2+1)).$$ 
For the merge to be beneficial, it is required that the sum of the Shapley-Shubik indices 
of $(n-1)$ and $n$ in the original game $v$ is less than the Shapley-Shubik index of 
$\&\{n-1,n\}$ in the game $v_{\&\{n-1,n\}}$, i.e., 
$4(p_1+1)(p_2+1) < n(p_1+1+p_2+1)$.  Since $(p_1+1)+(p_2+1)=n-1$, we have 
$$4(p_1+1)(p_2+1)\leq 4\left({\frac{n-1}{2}}\right)^2 < n(n-1) = n(p_1+1+p_2+1),$$ 
and so $\phi_{n-1}(v)+\phi_{n}(v) < \phi_{\&\{n-1,n\}}(v_{\&\{n-1,n\}})$.
\end{proof}

We examine the limits of advantage or disadvantage for the case of the annexation of another player to increase the Banzhaf index.

\begin{proposition}\label{annexation-banzhaf-bounds}
$ \frac{\beta_i(v)+\beta_j(v)}{2} \leq \beta_i(v_{\&(\{i,j\})})\leq 1$.
\end{proposition}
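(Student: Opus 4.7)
The upper bound is trivial, since any Banzhaf index lies in $[0,1]$. All the work is in the lower bound, and the plan is to reduce it to two counting facts about $v' := v_{\&\{i,j\}}$: first, the exact identity $\eta_{\&\{i,j\}}(v') = \tfrac{1}{2}\bigl(\eta_i(v) + \eta_j(v)\bigr)$, and second, the monotonicity $\eta_k(v') \leq \eta_k(v)$ for every other player $k$.

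For the identity I would parametrize by $C \subseteq N\setminus\{i,j\}$ and read off the criticality condition directly. Player $i$ is critical in $v$ for $C \cup \{i\}$ iff $w(C) \in [q-w_i,\, q)$, and for $C \cup \{i,j\}$ iff $w(C) \in [q-w_i-w_j,\, q-w_j)$, so $\eta_i(v)$ is the sum of the counts in these two intervals; similarly $\eta_j(v)$ corresponds to the intervals $[q-w_j,q)$ and $[q-w_i-w_j, q-w_i)$. Meanwhile $\eta_{\&\{i,j\}}(v') = |\{C : q-w_i-w_j \leq w(C) < q\}|$. A short case analysis on where $w(C)$ sits inside $[q-w_i-w_j,\,q)$ (split at $q - \max(w_i,w_j)$ and $q - \min(w_i,w_j)$) shows every such $C$ is covered by exactly two of the four intervals, yielding the identity.

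For the monotonicity, fix $k \notin \{i,j\}$ and classify the swing coalitions of $k$ in $v$ by their intersection with $\{i,j\}$, giving four disjoint classes. In $v'$ the merged bloc forces $i$ and $j$ to be jointly present or jointly absent, so only the ``both'' and ``neither'' classes survive; therefore $\eta_k(v') \leq \eta_k(v)$.

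Finally I would assemble. Let $M = \eta_{\&\{i,j\}}(v')$, $S = \sum_{k \neq i,j}\eta_k(v)$, and $S' = \sum_{k \neq i,j}\eta_k(v')$. Then
\[
\beta_{\&\{i,j\}}(v') \;=\; \frac{M}{M+S'}, \qquad \frac{\beta_i(v) + \beta_j(v)}{2} \;=\; \frac{M}{2M+S},
\]
so the desired inequality is equivalent to $M + S' \leq 2M + S$, i.e.\ $S' \leq M + S$, which follows from $S' \leq S$ together with $M \geq 0$. The only delicate step is the double-counting identity in the second paragraph; once the four-interval picture is set up correctly, the monotonicity and the final algebra are routine.
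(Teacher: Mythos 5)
Your proof is correct and follows essentially the same route as the paper: the identity $\eta_{\&\{i,j\}}(v') = \tfrac{1}{2}\bigl(\eta_i(v)+\eta_j(v)\bigr)$ (which the paper imports from the case analysis in the proof of Proposition~\ref{split-bound-proposition}, and which your four-interval double-counting re-derives), the monotonicity $\eta_k(v')\leq\eta_k(v)$ for the remaining players, and the same final denominator comparison. The only difference is presentational: you make the swing-counting identity self-contained rather than citing the splitting proposition.
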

\begin{proof}

Let $v$ be WVG $[q;w_1,\ldots, w_n]$. Suppose $i$ annexes or merges 
with player $j$ and $v'$ is $v_{\&(\{i,j\})}$. Then the new game is 
$((N\setminus \{j\})\cup \{\&(\{i,j\})\},v')$. 
From the proof of Proposition~\ref{split-bound-proposition}, 
we see that $\eta_{\&(\{i,j\})}(v')$ equals $\frac{1}{2}(\beta_i(v)+\beta_j(v))$. 

Now consider a player $x$ which is other than player $i$ or player $j$. Let $S$ be coalition such that $S\subseteq N\setminus\{i,j,x\}$. If $x$ is critical for $S$ in $v$ then $x$ is critical for $S$ in $v'$. If $x$ is critical for $S\cup \{i,j\}$ in $v$ then $x$ is critical for $S\cup \&(\{i,j\})$ in $v'$. However, $x$ may also be critical for $S\cup \{i\}$ or $S\cup \{j\}$ in $v$. 
So $\eta_x(v)\geq \eta_x(v')$. We see that:

\begin{eqnarray*}
\beta_{\&(\{i,j\})}(v')&=&\frac{\eta_{\&(\{i,j\})}(v')}{\eta_{\&(\{i,j\})}(v')+\sum_{x\in  (N\setminus\{i,j\})}\eta_x(v')}\\
&=&\frac{\frac12(\eta_i(v)+\eta_j(v))}{\frac12(\eta_i(v)+\eta_j(v))+\sum_{x\in  (N\setminus\{i,j\})}\eta_x(v')}\\
&\geq&\frac{\frac{1}{2}(\eta_i(v)+\eta_j(v))}{\eta_i(v)+\eta_j(v)+\sum_{x\in  (N\setminus\{i,j\})}\eta_x(v)}\\
&=&\frac{1}{2}(\beta_{i}(v)+\beta_j(v)).
\end{eqnarray*}

The upper bound is tight and easy to observe. If player $i$ is a dummy and $j$ is a dictator then $\beta_{i}(v)=0$ whereas $\beta_i(v')=1$. The upper bound can also be achieved by two big enough players joining forces.
\end{proof}

We have seen that annexation can be disadvantageous in the case of the Banzhaf index. One would at least expect that the Banzhaf index payoff after annexing another player to be monotone in the power of the annexed player. Surprisingly, this is not the case. Suppose $w_i \geq w_j \geq w_k$ in a WVG $v$. We provide an example where $\beta_{i,k}>\beta_{i,j}$. 
We call this the \emph{annexation non-monotonicity paradox}:

\begin{example}
In the WVG $[9;3,3,2,1,1,1]$ we see that player~$2$ has more weight 
than player~$3$. However if player $1$ annexes player $2$ to form game 
$[9;6,2,1,1,1]$, its Banzhaf index is $0.4$, whereas if player $1$ annexes 
player $3$ to form game $[9;5,3,1,1,1]$, its Banzhaf index is $0.411765$.
\end{example}

\begin{proposition}
For any coalition, $S\subset N\setminus \{i\}$, $ \phi_i(v) \leq \phi_i(v_{\&(\{i\}\cup S)})\leq 1$.
\end{proposition}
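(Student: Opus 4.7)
The plan is to reduce the general statement to the single-player annexation case that was already established immediately before the statement of this proposition (for $|S|=1$, by the argument comparing contributions $\frac{(|S|-1)!(n-|S|)!}{n!}$ to $\frac{(|S|-2)!(n-|S|)!}{(n-1)!}$ or $\frac{(|S|-1)!(n-|S|-1)!}{(n-1)!}$). Given that lemma, the multi-player case follows by a straightforward induction on $|S|$.

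Concretely, I would enumerate $S = \{j_1, j_2, \ldots, j_s\}$ in any order and define a chain of games $v^{(0)} = v$ and, for $1 \leq t \leq s$, $v^{(t)} = (v^{(t-1)})_{\&(\{b_{t-1}, j_t\})}$, where $b_0 = i$ and $b_t = \&(\{i\} \cup \{j_1, \ldots, j_t\})$ denotes the bloc obtained after the $t$-th annexation. Each step is a single-player annexation, so the result proved earlier in this section yields $\phi_{b_{t-1}}(v^{(t-1)}) \leq \phi_{b_t}(v^{(t)})$. Chaining these inequalities gives $\phi_i(v) = \phi_{b_0}(v^{(0)}) \leq \phi_{b_s}(v^{(s)}) = \phi_i(v_{\&(\{i\} \cup S)})$, where the last equality is just the abuse of notation identifying the annexer with the resulting bloc.

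The upper bound is essentially free: by definition $\phi_x(u) = \kappa_x(u)/n!$ where $\kappa_x$ sums marginal contributions $v(X) - v(X \setminus \{x\}) \in \{0,1\}$ over all $n!$ orderings (one contribution per ordering), so $\phi_x(u) \leq 1$ for any player $x$ in any simple game $u$. Applying this to the bloc $\&(\{i\} \cup S)$ in $v_{\&(\{i\} \cup S)}$ finishes the proof.

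The only mildly delicate step is verifying that the single-annexation inequality really applies at each stage of the induction: one must check that the argument used in the text does not depend on $v$ being the original game, only on $v^{(t-1)}$ being a simple (weighted) voting game and $b_{t-1}, j_t$ being distinct players in it, which is immediate from the construction. Beyond that bookkeeping, there is no real obstacle.
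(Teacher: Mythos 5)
Your proof is correct, but it takes a more self-contained route than the paper. The paper's own proof of the lower bound is a one-line citation of Felsenthal and Machover's theorem that annexation of an arbitrary coalition cannot decrease the Shapley--Shubik index; the upper bound is dismissed, exactly as you do, because any Shapley--Shubik index is trivially at most $1$. You instead derive the general lower bound from the single-player annexation inequality $\phi_i(v)\leq\phi_{\&\{i,j\}}(v_{\&\{i,j\}})$, which the paper proves explicitly just before this proposition, and then chain annexations one player at a time. The induction is sound: each intermediate game is again a weighted voting game (the bloc simply carries the summed weight), so the one-step lemma applies at every stage and the inequalities compose, and your identification of $b_s$ with the bloc $\&(\{i\}\cup S)$ is harmless. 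What your approach buys is independence from the external reference for the multi-player case; what it costs is that you inherit, as a black box, the one slightly loose point in the paper's one-step argument, namely that two coalitions $Y\cup\{i\}$ and $Y\cup\{i,j\}$ for which $i$ is critical in $v$ collapse to the same coalition $Y\cup\{\&\{i,j\}\}$ after the merge, so one must verify (a short computation confirms it) that the two old contributions sum to exactly the single new contribution rather than each being separately dominated by it. With that bookkeeping made explicit, your chained argument is a complete proof.
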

\begin{proof}
The lower bound follows from the result by Felsenthal and Machover \cite{mainbook} that annexation cannot decrease the Shapley-Shubik index of a player. Moreover, the upper bound is tight and easily attainable if $\{i\}\cup S$ is big enough.
\end{proof}

\begin{proposition}
For the unanimity game and for both the Shapley-Shubik index and Banzhaf index:
\begin{enumerate}
\item it is disadvantageous for a coalition to merge;
\item it is advantageous for a player to annex.
\end{enumerate}
\end{proposition}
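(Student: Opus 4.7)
The plan is to exploit the degenerate structure of the unanimity WVG. Since $q = w(N)$, the only winning coalition is $N$, so every player is critical in exactly this one coalition. Hence $\eta_i(v) = 1$ for every $i$, and both the Banzhaf and the Shapley-Shubik index assign each player the payoff $1/n$. This common value means that a single calculation will handle both indices simultaneously.

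The key observation is that both operations preserve unanimity: the quota is unchanged, and the combined weight of a bloc equals the sum of its constituents, so the post-manipulation game is again a unanimity WVG on a reduced set of players. If a coalition $S$ of size $s$ merges, the resulting game is a unanimity WVG on $n - s + 1$ players, so the bloc's payoff under either index is $1/(n - s + 1)$. If player $i$ annexes a set of size $s \geq 1$, the resulting game is a unanimity WVG on $n - s$ players, and player $i$ (now the bloc) receives $1/(n - s)$.

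Once these two formulas are in hand, both claims reduce to one-line inequalities. For (1), the sum of the pre-merge payoffs is $s/n$, and $s/n > 1/(n-s+1)$ rearranges to $(s-1)(n-s) > 0$, which holds for every proper merging coalition with $2 \leq s \leq n - 1$ (the boundary case $s = n$ is neutral rather than strictly disadvantageous, which is worth remarking on). For (2), $1/(n - s) > 1/n$ is immediate for $s \geq 1$.

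The only real step that requires any thought — and it is a small one — is verifying that the manipulated game is again a unanimity game, after which the proposition becomes a direct comparison of fractions with no combinatorics involved. Because the argument never distinguishes the two indices beyond using the shared value $1/n$, no separate treatment of the Shapley-Shubik case is needed.
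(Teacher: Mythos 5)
Your proposal is correct and follows essentially the same route as the paper: both reduce to the observation that the manipulated game is again a unanimity WVG, so every payoff is the reciprocal of the number of players, and then compare $s/n$ with $1/(n-s+1)$ (merge) and $1/n$ with $1/(n-s)$ (annexation). Your explicit factorization $(s-1)(n-s)>0$ and the remark that $s=n$ gives neutrality are slightly more careful than the paper's ``it is easy to see,'' but the argument is the same.
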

\begin{proof}
We check each case separately:
\begin{enumerate}
\item This is expected considering Proposition~\ref{unanimity-splitting}.  If $k$ players merge, then the payoff of the new coalition is $1/(n-k+1)$. It is easy to see that $1/(n-k+1)<k/n$.
\item For a unanimity WVG with $n$ players, the payoff of each player is $1/n$. If a player annexes $k-1$ other players, its payoff is $1/(n-k+1)$ which is more than $1/n$.
\end{enumerate}
\end{proof}

\begin{table*}[t]
\centering
\caption{Complexity of False Name Manipulations in WVGs}
\begin{tabular}{|l|c|c|} \hline
&Banzhaf index&Shapley-Shubik Index\\ \hline
SPLITTING&NP-hard&NP-hard \cite{edithideas}\\ \hline
MERGING&NP-hard&NP-hard\\ \hline
ANNEXATION&NP-hard& advantageous \cite{mainbook}\\ \hline
SPLITTING in unanimity game&advantageous&advantageous \cite{edithideas}\\ \hline
MERGING in unanimity game&disadvantageous&disadvantageous\\ \hline
ANNEXATION in unanimity game&advantageous& advantageous\\ \hline
\end{tabular}
\label{ComplexityofFalseNameManipulations}
\end{table*}

In a WVG, if player $i$ annexes a dummy, then there is no difference to the Banzhaf index payoff of each player. This is because the Banzhaf value of each player reduces to half of the original Banzhaf value. Moreover, it follows from Proposition~\ref{annexation-banzhaf-bounds} that if a player annexes a player bigger than itself, its Banzhaf index can only increase. Thus annexation could only be disadvantageous, if a player annexes a smaller player. Although, deciding a beneficial merge or annexation is computationally difficult, it may often be easier in practice. We propose a simple heuristic to get beneficial annexations or at least to avoid disadvantageous annexations. It appears to be a better strategy to annex fewer players with some total weight than more players with the same total weight. This is because, while annexing, the annexer does not want to increase the payoff of other players significantly. 

\section{Conclusions}

Weighted voting games are important game models in multiagent systems. We have investigated the impact on the Banzhaf power distribution due to a player splitting into smaller players in a weighted voting game. We have also considered the case of manipulation via annexation and voluntary merging when the payoff is according to the Banzhaf index or the Shapley-Shubik index. Both the complexity of manipulation and the limits of manipulation are examined. The complexity results are summarised in 
Table~\ref{ComplexityofFalseNameManipulations}. The Shapley-Shubik index appears to be a more desirable solution because annexation does not decrease the payoff of a player. It is seen that manipulation may be discouraged by keeping weights which are large or non-integers. The finer, more detailed, analysis for players splitting into more than two players or merging into bigger blocs is still unexplored. Although, it is NP-hard to evaluate different false-name manipulations, it may be the case that certain instances of WVGs are more susceptible to manipulation \cite{Aziz-classification}. A careful investigation of heuristics for false-name manipulation is also a promising area of research. There is scope to analyse such false-name manipulations with respect to other cooperative game-theoretic solutions. A particularly suitable solution to consider could be the nucleolus which not only always exists but is also unique. Further examination into various aspects of manipulation in weighted voting games promises to give better insight into designing fairer and manipulation-resistant systems. Another interesting question is to what extent can the results be applied to more general cooperative games.


\section*{Acknowledgment}

Partial support for this research was provided by DIMAP (the Centre for Discrete Mathematics and its Applications). DIMAP is funded by the UK EPSRC 
under grant EP/D063191/1. The first author would also like to thank the Pakistan National ICT R \& D Fund for funding his research. We are also thankful to John Fearnley and the anonymous referees for valuable comments on the earlier draft of this paper.

\bibliographystyle{plain}

{\bf HARIS AZIZ}
\\
{\it Department of Computer Science, University of Warwick, 
Coventry CV4 7AL, United Kingdom.\\
haris.aziz@warwick.ac.uk.}

{\bf MIKE PATERSON}
\\
{\it Department of Computer Science, University of Warwick, 
Coventry CV4 7AL, United Kingdom.\\
msp@dcs.warwick.ac.uk.}

\end{document}